\documentclass{article}

\usepackage{arxiv}

\usepackage[utf8]{inputenc} 
\usepackage[T1]{fontenc}    
\usepackage{hyperref}       
\usepackage{url}            
\usepackage{booktabs}       
\usepackage{amsfonts}       
\usepackage{amssymb}
\usepackage{nicefrac}       
\usepackage{microtype}      
\usepackage{cleveref}       
\usepackage{lipsum}         
\usepackage{graphicx}
\usepackage{doi}
\usepackage{algorithm}
\usepackage{algorithmicx}
\usepackage{algpseudocode}

\title{Directional movement of a collective of compassless automata on square lattice of width 2}

\newif\ifuniqueAffiliation
\uniqueAffiliationtrue

\ifuniqueAffiliation 
\author{ \href{https://orcid.org/0000-0001-6435-819X}{\includegraphics[scale=0.06]{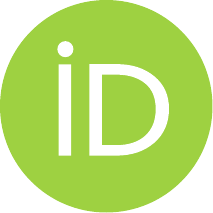}\hspace{1mm}Sergey V.~Sapunov} \\
	Institute of Applied Mathematics and Mechanics\\
	National Academy of Sciences of Ukraine\\
	Batiouk Str. 19, 84116 Sloviansk, Ukraine\\
	\texttt{sapunov@nas.gov.ua} \\
}
\fi

\newtheorem{theorem}{Theorem}
\newenvironment{proof}[1][Proof]{\begin{trivlist}
\item[\hskip \labelsep {\bfseries #1}]}{\end{trivlist}}
\newcommand{\qed}{\nobreak \ifvmode \relax \else
      \ifdim\lastskip<1.5em \hskip-\lastskip
      \hskip1.5em plus0em minus0.5em \fi \nobreak
      \vrule height0.75em width0.5em depth0.25em\fi}

\begin{document}
\maketitle

\begin{abstract}
We study the following problem: Can a collective of finite automata maintain directed movement on a two-dimensional integer lattice of width 2, where the elements (vertices) are anonymous? 
The automata do not distinguish between vertices based on their coordinates of direction (that means each automaton has no compass).
We considered collectives consisting of an automaton and some pebbles, which are automata of the simplest form, whose positions are entirely determined by the automaton.
We demonstrate that a collective of one automaton and a maximum of three pebbles cannot maintain a direction of movement on the lattice. 
However, a collective of one automaton and four pebbles can do so.
\end{abstract}

\keywords{Collectives of automata \and Labyrinth \and Directed movement}

\section{Introduction}
Automata walking on graphs are a mathematical formalization of autonomous mobile agents with limited memory operating in discrete environments.
Studies on the behavior of automata in finite and infinite labyrinths, which are embedded directed graphs of a specific form, have emerged and are intensively developing under this model \cite{Hemmerling1989, Kilibarda2003, Kilibarda2003a}.
Research in this area has various applications, such as image analysis \cite{Kari2006, Stamatovic2017} and mobile robotics navigation \cite{Dudek2010}.
The assumption that automata operating in labyrinths can distinguish directions, i.e., they have a compass, is crucial for the results on automata and labyrinths \cite{Blum1978, Donald2012}.

This paper aims to investigate collectives of compassless automata that interact in an environment represented by an infinite two-dimensional integer lattice of width $2$.
Each automaton receives information about the presence or absence of other automata at the neighboring vertices (lattice points) when interacting with the environment. Based on this input, it moves to one of these vertices.
The automata do not differentiate between the directions or positions of the vertices, but they can distinguish between occupied and unoccupied vertices.
That means each automaton has no compass. 
This limitation in their capabilities makes their behavior on the lattice more complicated.
For example, while an automaton equipped with a compass can easily maintain its movement direction on the lattice, an automaton without a compass would need additional equipment and the development of methods for its use.

It has been proven that a collective consisting of one active automaton and three passive automata, or pebbles, can maintain the direction of movement on a one-dimensional integer lattice. 
On the other hand, no collective consisting of one automaton and less than three pebbles can do so \cite{Kurganskyy2016}.
This paper presents sufficient and necessary conditions in the form of constraints on automata properties and collective structure, that enable the collective of automata to act as a uniform entity and preserve the direction of movement on an infinite square lattice with a width of $2$.

\section{Basic definitions}

Let $\mathbb{Z}$ denote the set of integers, and let $\mathbb{N}$ denote the the set of natural numbers (along with zero).
We will use the symbol $\mathbb{Z}_{m}$ to denote the set $\left\{ 0, 1, ..., m - 1 \right\}$ for any $n \in \mathbb{N}$.
An arbitrary set $E \subseteq \mathbb{Z}^{n}$ is called an $n$-dimensional geometric environment.
The elements of the set $E$ are called the vertices of the environment.
Two vertices $v = \left( a_{1}, a_{2}, ..., a_{n} \right)$ and $v^{\prime} = \left( b_{1}, b_{2}, ..., b_{n} \right)$ are called neighbours if $\vert a_{i} - b_{i} \vert$ for one $i \in \left\{ 1, ..., n \right\}$ and $a_{i} = b_{i}$ for the remaining $i$.
The set of all vertices neighbouring $v \in E$ is called the neighbourhood of $v$.

The direction in the environment $\mathbb{Z}^{n}$ relative to the current vertex is any tuple of $n$ setwise coprime integers.
We say that a vertex $v^{\prime} = \left( b_{1}, b_{2}, ..., b_{n} \right)$ is in the direction $\left( d_{1}, d_{2}, ..., d_{n} \right)$ from a vertex $v = \left( a_{1}, a_{2}, ..., a_{n} \right)$ if $b_{1} = a_{1} + kd_{1}$, $b_{2} = a_{2} + kd_{2}$, ..., $b_{n} = a_{n} + kd_{n}$,
where $k \in \mathbb{N}$.

Suppose a finite automaton $A$ moves in environment $E$. 
Its input is information about current vertex and neighbourhood. 
The automaton's output is to move to a vertex neighbouring the current one, chosen on the basis of the input analysis.
If an automaton $A$ distinguishes between vertices in the current neighbourhood by the coordinates of directions in the environment $E$, we call it an automaton with a compass. 
Otherwise, if it does not use the coordinate system, we call it a compassless automaton.

We will also consider a collective of automata $\mathcal{A} = \left( A_{1}, \dots, A_{m} \right)$ interacting in the environment $E$.
Each automaton $A_{i}$ receives information about the current vertex and its neighbourhood along with information about the presence of other automata of the collective $\mathcal{A}$.
$\mathcal{A}$ is said to be a collective of compassless automata if every automaton in $\mathcal{A}$ is a compassless automaton. 
Below we consider only such collectives of automata.

Let $J \subset \{ 1, \dots, m \}$.
A subsystem $\left( A_{j} \right)_{j \in J}$ of a collective $\mathcal{A} = \left( A_{1}, \dots, A_{m} \right)$ is called automata-pebbles (or pebbles) in this collective if for all $j \in J$ the following conditions hold: (1) $A_{j}$ has a single internal state; (2) $A_{j}$ can only move if there is an automaton $A_{i}$, $i \not\in J$, on the same vertex, and $A_{j}$ can only move to the same vertex as $A_{i}$.
The pebbles play the role of an external memory for automata.
The collective of type $(1, m)$ is called the collective $\mathcal{A} = \left( A_{1}, A_{2}, \dots, A_{m+1} \right)$ consisting of an automaton $A_1$ and $m$ pebbles $A_{2}$, ..., $A_{m+1}$.

We will provide more precise definitions.
The geometric environment $E = \mathbb{Z} \times \mathbb{Z}_{2}$ is called a square lattice of width~$2$.
Let $\mathcal{P}(M)$ be the set of all subsets of any set $M$.
Let $M$ denote the set of all members of the collective $\mathcal{A}$.

Each automaton $A_{i} \in M$, $1 \leq i \leq m+1$, is a sextuple $A_{i} = \left( Q_{i}, X_{i}, Y_{i}, q^{i}_{0}, \varphi_{i}, \psi_{i} \right)$, where 
$Q_{i}$ is a finite set of internal states; 
$X_{i} = \left\{ \left( \alpha, \{ \beta, \gamma, \delta \} \right) \vert\, \alpha, \beta, \gamma, \delta \subseteq M \right\}$ is a finite input alphabet (here $\alpha$ describes the automata in the current vertex, and $\{ \beta, \gamma, \delta \}$ describes the automata in its neighbourhood);
$Y_{i} = \mathcal{P}(M) \cup \{ \mathrm{stay} \}$ is a finite output alphabet (here $y = \{ \mathrm{stay} \}$ means ''stay at the current vertex'', $y = \varnothing$ means ''move to any vertex from the current vertex's neighbourhood that has no automata'', $y \in Y_{i} \setminus \{ \varnothing, \mathrm{stay} \}$ means ''move to the vertex containing only automata from the set $y$'');
$q^{0}_{i} \in Q_{i}$ is an initial state; 
$\varphi_{i} : Q_{i} \times X_{i} \to Q_{i}$ is a transition function;
$\psi_{i} : Q_{i} \times X_{i} \to Y_{i}$ is an output function.

For any pebble $A_{j}$, $2 \leq j \leq m+1$, the following conditions are true:

1) $Q_{j} = \left\{ q^{0}_{j} \right\}$;

2) for any $x = \left( \alpha, \{ \beta, \gamma, \delta \} \right) \in X_{j}$ either $\psi_{j} \left( q^{i}_{0}, x \right) = \mathrm{stay}$, or if $\psi_{j} \left( q^{i}_{0}, x \right) = y \neq \mathrm{stay}$ then $A_{1} \in \alpha$ and there exists $q \in Q_{1}$ such that $\psi_{1} \left( q, x \right) = y$.

We assume that the automaton can distinguish between all pebbles.

The behaviour (nondeterministic) of a collective $\mathcal{A} = \left( A_{1}, A_{2}, ..., A_{m+1} \right)$ of the type $(1, m)$ on the geometric environment $E$ is the set $\Pi (\mathcal{A}, E)$ of sequences $\pi (\mathcal{A}, E)$:
$\left( \vec x_{0}, \vec q_{0}, \vec y_{0} \right)$, ..., $\left( \vec x_{t}, \vec q_{t}, \vec y_{t} \right)$, 
$\left( \vec x_{t+1}, \vec q_{t+1}, \vec y_{t+1} \right)$, ...,
where 
$\vec x_{t} = \left( x_{t}^{1}, \dots, x_{t}^{m+1}\right)$, 
$x_{t}^{j} = \left( \alpha_{t}^{j}, \left\{ \beta_{t}^{j}, \gamma_{t}^{j}, \delta_{t}^{j} \right\} \right) \in X_{j}$,
$\vec q_{t} = \left( q_{t}^{1}, \dots, q_{t}^{m+1} \right)$, 
$q_{t}^{j} \in Q_{j}$,
$\vec y_{t} = \left( y_{t}^{1}, \dots, y_{t}^{m+1} \right)$, $y_{t}^{j} \in Y_{j}$,
$1 \leq j \leq m+1$,
such that
$q_{t+1}^{j} = \varphi_{j} \left( q_{t}^{j}, x_{t}^{j} \right)$, $y_{t+1}^{j} = \psi_{j} \left( q_{t}^{j}, x_{t}^{j} \right)$.
A single sequence $\pi (\mathcal{A}, E)$ is called an implementation of behaviour $\Pi (\mathcal{A}, E)$.

Let the automaton $A_{i} \in M$, $1 \leq i \leq m+1$, be at vertex $v_{i}(t) = \left\{ a_{i_{1}}, a_{i_{2}} \right\}$ at time $t$.
The vector  $v_{\mathcal{A}} (t) = \left( v_{1} (t) + \ldots + v_{m+1} (t) \right) / (m+1)$ is called the coordinate of collective $\mathcal{A}$ at time $t$.
We will call $d_{\mathcal{A}} = \max\left\{ \left| a_{i_{k}} - a_{i_{l}} \right| \,\big\vert\, 1 \leqslant i \leqslant n, 1 \leqslant k, l \leqslant m+1 \right\}$ the diameter of this collective.
The movement of the collective $\mathcal{A}$ in the environment $E$ during the implementation of behaviour $\pi (\mathcal{A}, E)$ is called directed if there are constants $c_{1}, c_{2} \in \mathbb{N}$ such that the diameter $d_{\mathcal{A}} \leq c_{1}$ and for any moment $t$ there are natural $t^{\prime}, t^{\prime\prime} \leq c_{2}$ for which the equality $v_{\mathcal{A}} \left( t+t^{\prime} \right) - v_{\mathcal{A}} (t) = v_{\mathcal{A}} \left( t + t^{\prime} + t^{\prime\prime} \right) - v_{\mathcal{A}} \left( t+t^{\prime} \right)$ holds.
The directional movement of collective $\mathcal{A}$ is called uniform if $t^{\prime} = t^{\prime\prime} = c_{2}$.
In the following, we assume that the movement of a collective $\mathcal{A}$ in the environment $E$ is directed if it is directed for all implementations of its behaviour $\Pi (\mathcal{A}, E)$. 
Otherwise, we say that the movement of this collective is not directed.

A vertex of the environment is called free if it contains no pebbles. 
Otherwise the vertex is called occupied.

Below, pebble configuration will be considered as their relative position on the lattice vertices.
The initial position of the pebbles is called their initial configuration.
We will say that the configurations $K^{\prime}$ and $K^{\prime\prime}$ of the pebbles of collective $\mathcal{A}$ are indistinguishable if any implementation of the behaviour of this collective with the initial configuration $K^{\prime}$ is the implementation of the behaviour of the same collective with the initial configuration $K^{\prime\prime}$ and vice versa.
In the following, we say that two configurations of the pebbles of collective $\mathcal{A}$ are indistinguishable in the worst case if there is at least one implementation of the behaviour of this collective that is the same in both configurations.

A schema $\widetilde K$ of a configuration of pebbles $K$ is a fragment of the environment, where the vertices on which the pebbles are located are marked in some way. 
Obviously, different configurations can have the same schema.
Let's say that each specific configuration of pebbles is an interpretation of the corresponding schema.
Schemes $\widetilde K^{\prime}$ and $\widetilde K^{\prime\prime}$ are said to be indistinguishable if for any interpretation of scheme $\widetilde K^{\prime}$ there is an indistinguishable interpretation of scheme $\widetilde K^{\prime\prime}$ and vice versa. 
We define the worst-case indistinguishability of schemas in the same way as we did for configurations.
An elementary transfer of a pebble is the relocation of one pebble from its current vertex to a neighbouring vertex. 
The movements of the pebbles of collective $\mathcal{A}$ can be represented as a sequence of their elementary transfers and, therefore, as a sequence of configurations or schemes.

Let sub-collective $\mathcal{A}^{\prime}$ of collective $\mathcal{A}$ be an arbitrary subset of the members of that collective.
We say that sub-collective $\mathcal{A}^{\prime}$ is isolated in collective $\mathcal{A}$ if there exists an implementation of the behaviour of that collective in which no member of $\mathcal{A}^{\prime}$ observes any member of $\mathcal{A}$ not belonging to $\mathcal{A}^{\prime}$.
It is obvious that from the isolation of some sub-collective $\mathcal{A}^{\prime}$ of collective $\mathcal{A}$ follows the isolation of the sub-collective $\mathcal{A} \setminus \mathcal{A}^{\prime}$.

\section{Directional movement on a square lattice of width 2}

\begin{theorem}
The movement of the following collectives on a square lattice of width 2 is not directed:
\begin{enumerate}
\item a collective of a single automaton;
\item a collective of an automaton and a pebble;
\item a collective of one automaton and two pebbles;
\item a collective of one automaton and three pebbles.
\end{enumerate}
\end{theorem}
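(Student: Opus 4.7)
The plan is to prove each clause by contradiction: one assumes the movement is directed with bounded diameter $c_1$ and periodicity $c_2$, and constructs an implementation that violates the directedness conditions. Two features of the compassless setting drive the arguments: the neighbourhood component $\{\beta,\gamma,\delta\}$ of the input is an unordered set, so whenever two neighbours carry the same content the automaton cannot distinguish them and nondeterminism picks among the ties; and the lattice isometries $\sigma_c\colon(x,y)\mapsto(c-x,y)$ and $\sigma_c\tau\colon(x,y)\mapsto(c-x,1-y)$ preserve every possible input, so they send valid trajectories to valid trajectories with opposite horizontal displacement.

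For clause~1 I would observe that with $M=\{A_1\}$ every vertex presents the identical input $(\{A_1\},\{\varnothing,\varnothing,\varnothing\})$, so the output is always the same element of $\{\mathrm{stay},\varnothing\}$. If $\mathrm{stay}$, the automaton never moves; if $\varnothing$, the adversary nondeterministically chooses neighbours to produce a bounded but aperiodic walk --- for instance alternating between two vertices for increasingly long stretches --- which violates the existence of constants satisfying the displacement-equality requirement uniformly in $t$.

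For clauses~2--4 I would first note that finiteness of the pair (internal state, pebble schema up to translation) forces the tail of any directed implementation to be periodic with a nonzero horizontal translation $\Delta=(\Delta_x,0)$ per cycle, since $y\in\mathbb{Z}_2$ cannot drift. Each pebble must net-drift by $\Delta_x$ per cycle, and since a pebble moves only coincidentally with the automaton, every cycle contains at least one ``carry'' step per pebble. I would then enumerate the reachable schemas of the automaton together with $k\in\{1,2,3\}$ pebbles on the width-$2$ strip and show that each cycle must contain a moment at which either (a) the carry-step local view pairs two of the three neighbours with identical content, so adversarial nondeterminism selects the direction opposite to $\Delta_x$ and the cycle's pebble budget cannot be balanced; or (b) the full configuration is $\sigma_c$- or $\sigma_c\tau$-symmetric with pebble labels preserved, so splicing the original prefix with the mirrored continuation produces a valid implementation whose displacement condition fails at the splice. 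The isolation remark from the preceding section trims the enumeration: any pebble at distance $\geq 2$ from the automaton contributes nothing to the local view and can be pruned, reducing a $k$-pebble case to a $(k-1)$-pebble case.

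The main obstacle will be clause~4. Three labelled pebbles are enough to form asymmetric schemas on a width-$2$ strip --- for example an ``L''-shape $\{(0,0),(1,0),(0,1)\}$ --- whose carry steps can look fully deterministic in the local view and whose overall configuration admits no $\sigma_c$- or $\sigma_c\tau$-symmetry preserving the distinct pebble labels. To handle this I would analyze longer stretches of the cycle rather than isolated steps: any compact three-pebble schema carried by the automaton through a cycle of bounded diameter must at some point leave the automaton at distance $\geq 2$ from every pebble (otherwise the automaton cannot ``lap'' the pebbles to move them all by $\Delta_x$), and at such moments the local view reduces to that of clause~1, reintroducing nondeterminism. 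Quantifying how many such ``blind'' steps each cycle must contain, and checking that the adversary's choices on them cannot simultaneously be compatible with the carry-step constraints elsewhere in the cycle, is the technical heart of the proof and the most delicate step to formalize.
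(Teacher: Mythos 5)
Your clause~1 argument matches the paper's, and for clauses~2--3 your combination of tie-breaking nondeterminism and the isolation/pruning reduction is essentially the route the paper takes (it enumerates the five admissible two-pebble schemas and shows they collapse into indistinguishable classes, after which the oscillation adversary of clause~1 applies). One caveat already at this stage: your mechanism (b) asks for a lattice reflection that preserves the \emph{distinct pebble labels}, and such a symmetry essentially never exists once two or more labelled pebbles occupy distinct vertices. The notion the paper actually uses is observational indistinguishability: the automaton's input is the unordered set of neighbour contents, so a configuration and its mirror image produce identical observation sequences even though no label-preserving isometry relates them. This is what lets the adversary answer the automaton's probes consistently with either orientation.

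The genuine gap is in clause~4, which you correctly identify as the hard case but resolve with a false lemma. It is not true that an automaton relocating three pebbles through a bounded-diameter cycle on the width-$2$ strip must at some point be at distance $\geq 2$ from every pebble: the strip is narrow enough that the automaton can shuttle a pebble from the rear to the front while remaining at every step on or adjacent to some pebble (indeed the paper's own four-pebble algorithm in Theorem~2 does exactly this kind of lapping without ever going blind, and nothing about that routing needs the fourth pebble). So no ``blind steps'' are forced, and your budget-counting argument has nothing to count. The paper's actual argument is different in kind: it shows that the straight-line schemas $\widetilde K_{1}, \widetilde K_{2}$ and the L-shaped schemas $\widetilde K_{3}$--$\widetilde K_{6}$ are indistinguishable \emph{in the worst case}, because (i) the local views at the middle pebble and after one step off it coincide in both, (ii) any probe that would disambiguate them risks, under adversarial choice, stranding the automaton as an isolated sub-collective in one of the two candidate configurations and is therefore forbidden, and (iii) the elementary transfers available from these schemas map them into $\widetilde K_{7}$--$\widetilde K_{9}$ and back, yielding a closed cycle of schema transformations (the paper's Figure~3) in which the pebbles never leave a bounded region. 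That indistinguishability-plus-confinement argument, not a forced loss of contact, is what kills the three-pebble case, and your proposal as written does not supply a substitute for it.
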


\begin{proof}
1. By definition, a single automaton does not distinguish between free vertices in the neighbourhood of the current vertex. 
Therefore, at each transfer, it chooses a target vertex non-deterministically.
Then there is an implementation of the automaton's behaviour such that its movement is not directed.
For example, an implementation in which the automaton moves only between two neighbouring vertices.
It follows that the movement of a collective containing an isolated sub-collective of a single automaton is also not directed.

2. Let $\mathcal{A} = \left( A_{1}, A_{2} \right)$, which means that the collective is of the type $(1, 1)$.
If the automaton moves away from the pebble by a distance greater than $1$, it will form an isolated sub-collective of the collective $\mathcal{A}$.
Since the movement of a single automaton is not directed, the movement of the entire collective in this case is also not directed.
Thus, the automaton either moves with the pebble or moves in the neighbourhood of the vertex occupied by the pebble.
It is clear that any individual movements of the automaton in the neighbourhood of the vertex with the pebble do not lead to a directed movement of the collective $\mathcal{A}$.
If the automaton moves along with the pebble, then due to the non-deterministic choice of the target vertex, there exists an implementation of the behaviour of the collective $\mathcal{A}$, in which it moves between two neighbouring vertices.
Thus, the movement of a collective of one automaton and one pebble is not directed. 
Also, the movement of a collective containing an isolated sub-collective of one automaton and one pebble is not directed.

3. Let $\mathcal{A} = \left( A_{1}, A_{2}, A_{3} \right)$, which means that the collective is of the type $(1, 2)$.
It is important to ensure that the automaton and pebbles are positioned either on the same vertex or on pairwise neighbouring vertices to prevent the formation of isolated sub-collectives of a single automaton or an automation and one of the pebbles. 
This is necessary as the formation of such sub-collectives makes it impossible for the entire collective to perform directed movement.
From the above, it can be deduced that all potential configurations of the pebbles of the collective A can be depicted using the five schemes presented in Figure \ref{fig:fig1}.

\begin{figure}
	\centering
	\includegraphics[scale=0.7, clip]{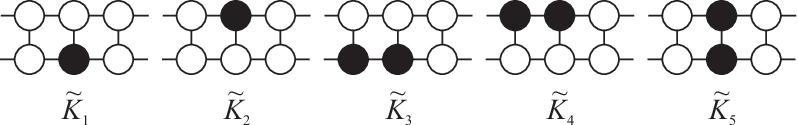}
	\caption{Schemes of configurations of pebbles from a collective of type $(1, 2)$.}
	\label{fig:fig1}
\end{figure}

Schemes $\widetilde K_{1}$ and $\widetilde K_{2}$ are indistinguishable due to the symmetry of the environment $E$.
Schemes $\widetilde K_{3}$ -- $\widetilde K_{5}$ are also indistinguishable for the same reason.
Indeed, automaton $A_{1}$, when on any of the pebbles in any of the interpretations of these schemes, observes two free vertices and one vertex occupied by another pebble in the neighbourhood of the current vertex.
The observations that the automaton can obtain during any individual movements not prohibited by the previous condition do not enable it to distinguish between the interpretations of the above schemes.
If the automaton transfers one of the pebbles to a vertex occupied by another pebble, the current interpretation of one of the schemes $\widetilde K_{3}$ -- $\widetilde K_{5}$ is transformed into the corresponding interpretation of schemes $\widetilde K_{1}$ or $\widetilde K_{2}$.
Once the automaton and the pebbles have arrived at the same vertex, they are unable to distinguish between free vertices in its neighbourhood by definition.
Then, for any initial configuration of pebbles, there exists an implementation of the behaviour of the collective $\mathcal{A}$ in which its movement is not directed. 
For example, an implementation in which it moves only in the neighbourhood of the initial vertex. 
Additionally, the movement of a collective containing a isolated sub-collective of one automaton and two pebbles is not directed.

\begin{figure}[h]
	\centering
	\includegraphics[scale=0.7, clip]{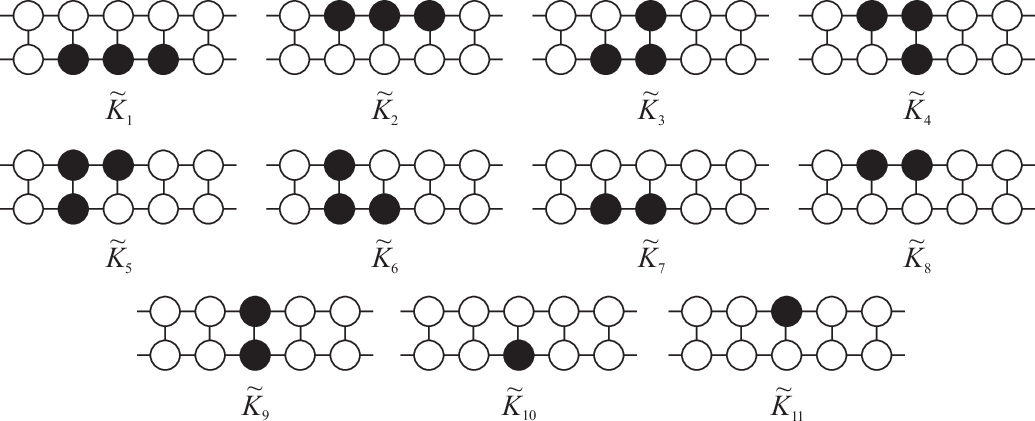}
	\caption{Schemes of configurations of pebbles from a collective of type $(1, 3)$.}
	\label{fig:fig2}
\end{figure}

4. Let $\mathcal{A} = \left( A_{1}, A_{2}, A_{3}, A_{4} \right)$, which means that the collective is of the type $(1, 3)$.
Since none of the automata in the collective can distinguish between free vertices in the neighbourhood of its current vertex, the location of collective members on non-neighbouring vertices leads to the formation of isolated sub-collectives of one automaton, an automaton and one of the pebbles, or an automaton and any two pebbles.
As the formation of such sub-collectives makes it impossible for the entire collective to perform directed movement, we will assume that the members of the collective are located on pairwise neighbouring vertices.
It follows that all potential configurations of the pebbles of the collective A can be represented in the form of eleven schemes shown in Figure \ref{fig:fig2}.

Suppose collective $\mathcal{A}$ is performing directed movement in environment $E$.
Let us also assume that the interpretations of schemes $\widetilde K_{10}$ or $\widetilde K_{11}$ occur in the sequence of pebble configurations that arises when this collective moves.
The information about the previous position of the pebbles, and therefore the direction of movement, is irrevocably lost if all the pebbles end up on the same vertex.
Then there exists an implementation of the behaviour of collective $\mathcal{A}$ in which its movement is not directed. 
Thus the interpretations of schemes $\widetilde K_{10}$ or $\widetilde K_{11}$ can appear in the sequence of configurations only once -- at the first position.
It is easy to check that, due to the symmetry of the square lattice, the following sets consist of pairwise indistinguishable schemes: 
$\left\{ \widetilde K_{3}, \widetilde K_{4}, \widetilde K_{5}, \widetilde K_{6} \right\}$, 
$\left\{ \widetilde K_{1}, \widetilde K_{2} \right\}$,
$\left\{ \widetilde K_{7}, \widetilde K_{8} \right\}$, and $\left\{ \widetilde K_{10}, \widetilde K_{11} \right\}$.

We will show that schemes $\widetilde K_{1}$ and $\widetilde K_{2}$ are indistinguishable in the worst case from schemes $\widetilde K_{3}$, $\widetilde K_{4}$, $\widetilde K_{5}$, and $\widetilde K_{6}$.
In each of these schemes, each pebble is placed on a separate vertex.
In each of these schemes, it is forbidden to move a pebble located on a vertex neighbouring two occupied vertices in order to prevent the formation of isolated sub-collectives.
The pebble placed on this vertex is called the middle pebble, and the other two pebbles are called the outer pebbles.
It is evident that two interpretations of the same scheme with different middle pebbles are distinguishable.
Let us choose all the interpretations of the schemes being considered, in which the middle pebble is the same.
Among the chosen interpretations, we will randomly select interpretation $K_{i}$ of scheme $\widetilde K_{1}$ and interpretation $K_{j}$ of scheme $\widetilde K_{3}$.
It is clear that the automaton, situated on the same pebble in configuration $K_{i}$ and $K_{j}$, receives identical input information.

Suppose that the pebbles are in configuration $K_{i}$. 
Then the automaton can move from the vertex with the middle pebble to the only free vertex in its neighbourhood. 
If the automaton moves from this vertex to any of the two free vertices neighbouring it, the only occupied vertex in the neighbourhood of the current vertex is the vertex with one of the outer pebbles.
If the pebbles were in the configuration $K_{j}$, the worst case scenario is that the same sequence of moves would lead to the formation of an isolated sub-collective consisting of the automaton $A_{1}$. 
This would make any further directed movement of the collective $\mathcal{A}$ impossible.
Thus, this sequence of moves is not permissible when the configuration of the pebbles is unknown a priori.

Suppose further that the pebbles are in the configuration $K_{j}$.
In this case, there is a free vertex that is neighbouring each of the vertices with the outer pebbles. 
Note that in configuration $K_{i}$ there is no free vertex with the above property.
As there are two free vertices neighbouring each vertex with an outer pebble, but only one satisfies the aforementioned property, the automaton, by definition, is unable to determine the desired vertex uniquely.

Thus, prior to the pebbles starting to transfer, the automaton at its worst cannot distinguish between the configurations of $K_{i}$ and $K_{j}$.

Let us examine the changes of the configurations $K_{i}$ and $K_{j}$ during elementary transfers of the same pebbles.
The configuration $K_{i}$ is transformed into one of the interpretations of the scheme $\widetilde K_{7}$ by the elementary transfer of any of the outer pebbles. 
In its turn, the configuration $K_{j}$ is transformed either into the interpretation of scheme $\widetilde K_{7}$ or into the interpretation of scheme $\widetilde K_{9}$.
Suppose that configuration $K_{i}$ has been transformed into interpretation $K^{\prime}_{i}$ of scheme $\widetilde K_{7}$, and configuration $K_{j}$ has been transformed into interpretation $K^{\prime}_{j}$ of scheme $\widetilde K_{9}$.
It is clear that the automaton, being on the vertices of configurations $K^{\prime}_{i}$ and $K^{\prime}_{j}$ occupied by the same pebbles, obtains identical information as input and cannot distinguishing between these configurations.
With the next elementary transfer of the pebble to a free vertex, the configuration $K^{\prime}_{i}$ is transformed into one of the interpretations of the schemes $\widetilde K_{3}$, $\widetilde K_{6}$ or $\widetilde K_{1}$. In its turn, the configuration $K^{\prime}_{j}$ is transformed into one of the interpretations of the schemes $\widetilde K_{3}$, $\widetilde K_{4}$, $\widetilde K_{5}$ or $\widetilde K_{6}$.

Thus, there exists an implementation of the behaviour of collective $\mathcal{A}$ that is the same for the initial configurations $K_{i}$ and $K_{j}$. 
Hence, these configurations are indistinguishable in the worst case. 
As these configurations were randomly chosen, it follows that schemes $\widetilde K_{1}$--$\widetilde K_{6}$ are indistinguishable in the worst case.
Note that the above also implies that schemes $\widetilde K_{7}$, $\widetilde K_{8}$, and $\widetilde K_{9}$ are indistinguishable in the worst case.

\begin{figure}[h]
	\centering
	\includegraphics[scale=0.9, clip]{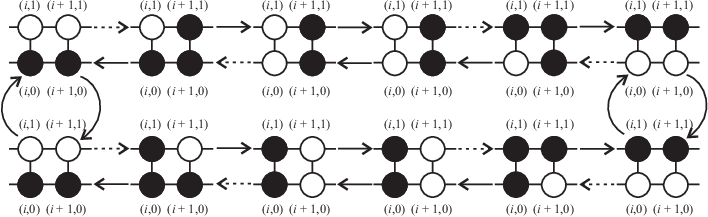}
	\caption{Transformation of the schemes of configurations of pebbles for collective type $(1, 3)$ in the worst case.}
	\label{fig:fig3}
\end{figure}

The example in Figure \ref{fig:fig3} shows a diagram of the transformation of schemes in which the pebbles in the worst case do not leave some limited area of the lattice. 
Here, a solid arrow denotes the transfer of a pebble to an occupied vertex, while a dashed arrow represents the transfer to a free vertex.

Therefore, there exists an implementation of the behaviour of collective $\mathcal{A}$ of type $(1, 3)$, in which its movement is not directed.
\qed
\end{proof}

\begin{theorem}
There exists a collective of one automaton and four pebbles that perform a directed movement on a square lattice of width $2$.
\end{theorem}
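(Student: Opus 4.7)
The plan is to display a concrete collective $\mathcal{A}=(A_1,A_2,A_3,A_4,A_5)$ of type $(1,4)$, a specific initial configuration $K_0$ of its four pebbles on $\mathbb{Z}\times\mathbb{Z}_2$, and a deterministic program (transition function $\varphi_1$ and output function $\psi_1$) for the active automaton $A_1$, so that the execution is periodic: after some bounded number $T$ of steps the configuration of $\mathcal{A}$ coincides with $K_0$ translated by $(+1,0)$ and $A_1$ is in its initial relative cell. Since every pebble stays in a bounded neighbourhood of $A_1$ during one cycle, the diameter $d_{\mathcal{A}}$ is bounded; the periodicity then yields $t'=t''=c_2=T$, so the movement is uniform and hence directed in the sense of the paper.

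The pebbles are arranged in an \emph{asymmetric} shape that distinguishes ``forward'' from ``backward''. A convenient choice is to place three pebbles on $(0,0),(1,0),(1,1)$, forming a $2\times 2$ block with one corner missing, and a fourth ``marker'' pebble on $(2,0)$, with $A_1$ initially on $(0,1)$. This shape has no nontrivial symmetry of $\mathbb{Z}\times\mathbb{Z}_2$, so the triple $(\alpha,\{\beta,\gamma,\delta\})$ seen from each occupied cell differs from the triple seen at any other occupied cell and, in particular, from the triple obtained by reflecting the shape. Because the pebbles are assumed pairwise distinguishable by $A_1$, the identities of the pebbles appearing in the input carry the directional information automatically; with three pebbles this information was not enough to break the symmetries listed in Theorem~1, whereas the fourth ``marker'' pebble provides a unique asymmetric landmark.

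The program for $A_1$ cycles through a short fixed sequence of phases. In each phase $A_1$ (i) walks along a deterministic path to the marker and pushes it one step to the right along the top row, (ii) fetches the back pebble at $(0,0)$ and drags it to $(1,0)$, (iii) likewise shifts the remaining two pebbles one step to the right in a prescribed order, and finally (iv) returns to the vertex corresponding to its original cell in the translated configuration. Every elementary transfer is triggered by an output $y\in Y_1\setminus\{\varnothing,\mathrm{stay}\}$ pointing at a specific pebble or at a vertex occupied by $A_1$ itself, so the nondeterministic ``free-vertex'' output $\varnothing$ is never invoked and every implementation of $\Pi(\mathcal{A},E)$ coincides with this one. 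The main obstacle is to verify that the list of triples $(q_t,x_t)$ actually observed by $A_1$ during one cycle contains no two equal entries demanding different next actions, so that $\varphi_1$ and $\psi_1$ are well-defined single-valued maps on a finite set; this is a finite case analysis over the $T$ steps of the cycle, and any residual collision is broken by enlarging the finite set $Q_1$ with an additional counter state (the size of $Q_1$ is not a priori constrained, and $T$ is a fixed constant). Once consistency is checked, $\varphi_1$ and $\psi_1$ are read off from the case list, and the inductive assertion ``after $kT$ steps the configuration of $\mathcal{A}$ is $K_0$ translated by $(k,0)$, with $A_1$ in its initial relative cell'' proves that the movement of $\mathcal{A}$ is uniform and directed.
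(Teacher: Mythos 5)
Your outline has the right high-level shape (a type $(1,4)$ collective, a cyclic program whose net effect per cycle is a translation by $(1,0)$, boundedness of the diameter, and an induction over cycles), but it rests on a claim that is both unproved and, for your chosen configuration, false: that ``the nondeterministic free-vertex output $\varnothing$ is never invoked and every implementation of $\Pi(\mathcal{A},E)$ coincides with this one.'' In this model the \emph{only} way to move onto a free vertex is the output $\varnothing$, which is resolved nondeterministically among \emph{all} free neighbours; a move to a free vertex is deterministic only when the current vertex (which has exactly three neighbours in $\mathbb{Z}\times\mathbb{Z}_2$) has its other two neighbours occupied by pebbles. To advance, some pebble must eventually be carried onto a frontier vertex that is currently free, and with your initial configuration $\{(0,0),(1,0),(1,1),(2,0)\}$ the first such transfer cannot be shielded: to occupy $(3,0)$ from $(2,0)$ you would need $(2,1)$ already occupied, and to occupy $(2,1)$ from either $(1,1)$ or $(2,0)$ the launching vertex has two free neighbours. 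So $\varnothing$ \emph{is} invoked, the adversary may send the automaton to the wrong free vertex, and your argument that all implementations coincide collapses. (Your phrase ``an output pointing at a vertex occupied by $A_1$ itself'' also does not denote a legal move, since $A_1$ occupies the current vertex, not a neighbour.)

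This is precisely the difficulty the paper's proof is organized around: its Algorithm~1 accepts that line 4 (``$A_1$ and $D$ move to a free vertex'') is a genuine binary nondeterministic choice, and then branches on the observable consequence (whether pebble $H$ is visible from the landing vertex) to \emph{recover} from either outcome, ending both branches in the same translated configuration after 9 or 11 steps. A correct proof along your lines must either (a) exhibit and verify a program in which every move to a free vertex happens from a vertex with exactly one free neighbour throughout the entire cycle --- which you have not done and which your configuration does not permit at the frontier --- or (b) include a case analysis showing the collective can detect and repair every possible resolution of each $\varnothing$-move. As written, the proposal omits the central idea of the theorem. The remaining deferral (``any residual collision is broken by enlarging $Q_1$'') is acceptable bookkeeping, but it cannot absorb this gap, because the problem is not a state-collision in $\varphi_1$ but the environment's nondeterminism.
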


\begin{proof}
Let us construct a collective $\mathcal{A} = \left( A_{1}, A_{2}, A_{3}, A_{4}, A_{5} \right)$ that performs directed movement in the environment $E$.
Let us define variables $B$, $C$, $D$, and $H$ which have values in the set of pebbles $\left( A_{2}, A_{3}, A_{4}, A_{5} \right)$, where $B \neq C \neq D \neq H$.
Suppose that initially the pebbles are placed as shown in Figure \ref{fig:fig4}(a) and the automaton $A_{1}$ is located on the same vertex as the pebble $B$. 
Assume that the direction in which the collective A should move is determined by the initial position of the pebbles $B$, $C$ and $D$. 
The collective's strategy is to follow Algorithm 1.

\begin{algorithm}
\caption{Directed movement of collective $\mathcal{A}$ of type (1,4)}
\begin{algorithmic}[1]
\Require initial coordinate $v_{\mathcal{A}}(0)$
\Ensure a sequence of coordinates $\left( v_{\mathcal{A}}(0), v_{\mathcal{A}}(1), \dots \right)$ that satisfies the definition of directed movement
\Statex
\Loop
\State $A_{1}$ and $B$ move to the vertex where $C$ is located
\State $A_{1}$ and $C$ move to the vertex where $D$ is located
\State $A_{1}$ and $D$ move to a free vertex
\If{there is $H$ in the neighbourhood of the current vertex}
\State $A_{1}$ moves to the vertex where $H$ is located
\State $A_{1}$ and $H$ move to the vertex where $D$ is located
\State $A_{1}$ and $D$ move to the vertex where $C$ is located
\State $A_{1}$ and $D$ move to a free vertex
\Else
\State $A_{1}$ moves to the vertex where $C$ is located
\State $A_{1}$ moves to the vertex where $B$ is located
\State $A_{1}$ moves to the vertex where $H$ is located
\State $A_{1}$ and $H$ move to the vertex where $B$ is located
\State $A_{1}$ and $H$ move to the vertex where $C$ is located
\State $A_{1}$ and $H$ move to a free vertex
\EndIf
\State $A_{1}$ moves to the vertex where $C$ is located
\State $A_{1}$ moves to the vertex where $B$ is located
\EndLoop
\end{algorithmic}
\end{algorithm}

Let us demonstrate that this strategy does result in the directed movement of the collective $\mathcal{A}$. 
Figures \ref{fig:fig4}--\ref{fig:fig6} illustrate the movement of the collective while executing the Algorithm 1.

\begin{figure}[h]
	\centering
	\includegraphics[scale=0.8, clip]{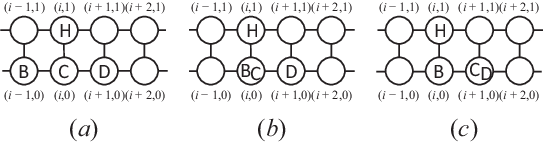}
	\caption{Execution of Algorithm 1 (lines 2--3)}
	\label{fig:fig4}
\end{figure}

Let us recall the assumption that at the initial moment of time the pebbles of collective $\mathcal{A}$ are placed as shown in Figure \ref{fig:fig4}(a), and the automaton $A_{1}$ is located on the same vertex as the pebble $B$.
Since each pebble is unique, the algorithm's instructions in lines 2 and 3 are executed deterministically. 
The positions of the pebbles after each of these lines are shown in Figure \ref{fig:fig4}(b) and Figure \ref{fig:fig4}(c), respectively. 
When executing the instruction in line 4, the automaton randomly selects one of the two free vertices in the neighbourhood of the current vertex and moves to it with the pebble $D$.

\begin{figure}[h]
	\centering
	\includegraphics[scale=0.8, clip]{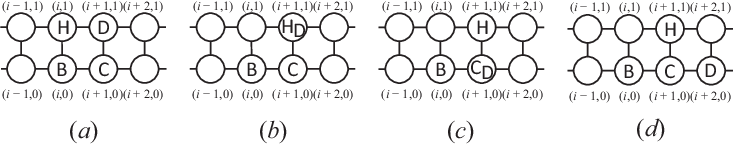}
	\caption{Execution of Algorithm 1 (lines 4, 6--9)}
	\label{fig:fig5}
\end{figure}

If the new current vertex is vertex $(i+1, 1)$ (see Figure \ref{fig:fig5}(a)), then $A_{1}$ observes the pebble $H$ in its neighbourhood. 
In this case, $A_{1}$ leaves $D$ in the current vertex and deterministically moves to the vertex where $H$ is located (line 6). 
Then $A_{1}$ and $H$ move deterministically to the vertex where $D$ is located (line 7 and Figure \ref{fig:fig5}(b)). 
After that, $A_{1}$ and $D$ also move deterministically to the vertex where $C$ is located (line 8 and Figure \ref{fig:fig5}(c)). 
In the neighbourhood of the current vertex, there is a single free vertex ($(i+2, 0)$ in Figure \ref{fig:fig5}(c)), to which $A_{1}$ and $D$ move deterministically (line 9 and Figure \ref{fig:fig5}(d)).

\begin{figure}
	\centering
	\includegraphics[scale=0.8, clip]{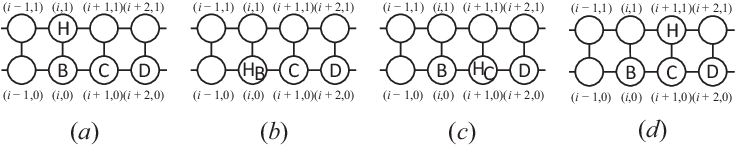}
	\caption{Execution of Algorithm 1 (lines 4, 11--16)}
	\label{fig:fig6}
\end{figure}

If, after line 4, $A_{1}$ and $D$ are at vertex $(i+1, 0)$, then there is no pebble $H$ in its neighbourhood (see Figure \ref{fig:fig6}(a)).
Note that in this scenario, pebbles $B$, $C$, and $D$ have already been moved to their respective vertices, and all that remains is to position pebble $H$ at its new vertex.
The automaton $A_{1}$ moves deterministically through the vertices with pebbles $C$ and $B$ to the vertex with $H$ (lines 11-13). 
Then, $A_{1}$ and H move deterministically to the vertex with pebble $B$ through the vertex with $C$ (lines 14, 15 and Figures \ref{fig:fig6}(b) and \ref{fig:fig6}(c)). 
In the neighbourhood of the current vertex, there is a single free vertex ($(i+1, 1)$ in Figure \ref{fig:fig6}(c)), to which $A_{1}$ and H move deterministically (line 16 and Figure \ref{fig:fig6}(d)).

Automaton $A_{1}$ moves deterministically from vertex with a pebble $B$ to vertex with a pebble $C$ at the end of the current iteration of the algorithm.

Suppose that at time $t=0$, collective $\mathcal{A}$ is located at vertices $( i-1, 0 )$, $( i, 0 )$, $( i+1, 0 )$, and $( i, 1 )$. 
Then its diameter is $2$ and its coordinate is $\left( i-\frac{1}{5}, \frac{1}{5} \right)$. 
At the end of the first iteration of the main algorithm cycle, collective $\mathcal{A}$ has moved in the direction $( 1, 0 )$ and is located at vertices $( i, 0 )$, $( i+1, 0 )$, $( i+2, 0 )$, and $( i+1, 1 )$, its diameter remains $2$, and its coordinate is $\left( i+\frac{1}{5}, \frac{1}{5} \right)$.
One iteration of the algorithm's main cycle can take either 9 or 11 steps, depending on the non-deterministic transition to a free vertex in line 4. 
Let $t^{\prime}$ indicate the duration of the first iteration.
At the end of the second iteration, collective $\mathcal{A}$ has moved again in the direction $( 1, 0 )$ and is now positioned at vertices $( i+1, 0 )$, $( i+2, 0 )$, $( i+3, 0 )$, and $( i+2, 1 )$. 
Its diameter remains $2$, and its coordinate is $\left( i+\frac{9}{5}, \frac{1}{5} \right)$.
Let $t^{\prime\prime}$ represent the time required to complete the second iteration.
It is easy to see that $v_{\mathcal{A}} \left( t^{\prime} \right) - v_{\mathcal{A}} (0) = v_{\mathcal{A}} \left( t^{\prime}+t^{\prime\prime} \right) - v_{\mathcal{A}} \left( t^{\prime} \right) = (1, 0)$.
Suppose that at the end of the $k$-th iteration at time $t$, collective $\mathcal{A}$ is located at vertices $( i+k-1, 0 )$, $( i+k, 0 )$, $( i+k+1, 0 )$, and $( i+k, 1 )$.
Then $v_{\mathcal{A}}(t) = \left( i+k+\frac{1}{5}, \frac{1}{5} \right)$ and the diameter is $2$.
After the $(k+1)$-th iteration, which took $t^{\prime} \in \{ 9, 11 \}$ steps, collective $\mathcal{A}$ moved in the direction $(1, 0)$ and is at vertices $( i+k, 0 )$, $( i+k+1, 0 )$, $( i+k+2, 0 )$, and $( i+k+1, 1 )$. 
In this case, $v_{\mathcal{A}} \left( t+t^{\prime} \right) = \left( t+k+\frac{4}{5}, \frac{1}{5} \right)$, and the diameter is still $2$.
After the $(k+2)$-th iteration, which took $t^{\prime\prime} \in \{ 9, 11 \}$ steps, collective $\mathcal{A}$ moved in the direction $(1, 0)$ and is at vertices $( i+k+1, 0 )$, $( i+k+2, 0 )$, $( i+k+3, 0 )$, and $( i+k+2, 1 )$.
Its diameter remains $2$, and $v_{\mathcal{A}} \left( t+t^{\prime}+t^{\prime\prime} \right) = \left( i+k+\frac{9}{5}, \frac{1}{5} \right)$.
It is easy to see that $v_{\mathcal{A}} \left( t+t^{\prime} \right) - v_{\mathcal{A}} (t) = v_{\mathcal{A}} \left( t+t^{\prime}+t^{\prime\prime}\right) - v_{\mathcal{A}} \left( t+t^{\prime} \right) = ( 1, 0 )$.
Thus, the movement of collective $\mathcal{A}$ is directed. \qed
\end{proof}

\paragraph{Conclusion}
In this paper, we present the necessary and sufficient conditions for a collective consisting of an automaton and a finite set of pebbles to maintain its direction of movement in an environment that is an infinite square lattice of width 2.
This study establishes a foundation for future investigations into the behavior of automaton collectives without a compass in discrete geometric environments.

\paragraph{Acknowledgements}
The author is partially supported by the Grant EFDS-FL2-08 of the found The European Federation of Academies of Sciences and Humanities (ALLEA)

\bibliographystyle{vancouver}
\bibliography{references}  

\end{document}

\typeout{get arXiv to do 4 passes: Label(s) may have changed. Rerun}